\documentclass[12pt,a4paper]{article}

\usepackage[utf8]{inputenc}
\usepackage[T1]{fontenc}
\usepackage{amsmath}
\usepackage{amssymb}
\usepackage{amsthm}
\usepackage{mathtools}
\usepackage{bm}
\usepackage{xcolor}
\usepackage{mathrsfs}
\usepackage{hyperref}

\usepackage[margin=1in]{geometry}
\usepackage{graphicx}
\usepackage{booktabs}
\usepackage{caption}
\usepackage{subcaption}
\usepackage{microtype}

\usepackage{algorithm}
\usepackage{algpseudocode}

\usepackage[backend=biber, style=numeric, sorting=nyt]{biblatex}
\usepackage{hyperref}
\hypersetup{
    colorlinks=true,
    linkcolor=blue,
    citecolor=red,
    urlcolor=teal,
}

\newtheorem{theorem}{Theorem}[section]
\newtheorem{lemma}[theorem]{Lemma}

\newtheorem{example}{Example}

\begin{document}

\title{Two-dimensional constacyclic codes over finite chain rings\footnotetext{2020 Mathematics Subject Classication. Primary: 11T71; Secondary: 94B05.\\
Keywords: Primitive idempotents, finite chain rings, two-dimensional constacyclic codes, MHDR codes.\\
*Corresponding author: Ridhima Thakral.}}
\author{
Vaishali Singh$^{1}$, Sucheta Dutt$^{2}$ and Ridhima Thakral$^{3,*}$\\[6pt]
\href{mailto:vaishali.phd24maths@pec.edu.in}{vaishali.phd24maths@pec.edu.in}\\[6pt]
\href{mailto:sucheta@pec.edu.in}{sucheta@pec.edu.in}\\[6pt]
\href{mailto:ridhimathakral@pec.edu.in}{ridhimathakral@pec.edu.in}\\[6pt]
$^{1,2,3}$Department of Mathematics,\\ 
Punjab Engineering College 
(Deemed to be University),\\
Sector 12, Chandigarh 160012, India\\
}
\date{}
\maketitle

\begin{abstract}
The main focus of this paper is on the algebraic structure of two-dimensional $(\lambda,\mu)$-constacyclic codes of length $\ell\mathrm{m}$ over finite chain rings with residue field $\mathbb{F}_q$, where $q \equiv 1 \pmod{r\mathrm{m}}$ and $r$ denotes the multiplicative order of $\bar{\mu}$.  In this paper, the structure of two-dimensional $(\lambda,\mu)$-constacyclic codes is obtained. Our approach relies on analysing primitive idempotents within the finite chain ring to determine the generators of these codes. We also find the condition under which two-dimensional constacyclic codes are maximum hamming distance with respect to rank (MHDR) over finite chain rings.

\end{abstract}

\section{Introduction}
The practical need for reliable data transmission and storage in digital systems has motivated a deep and extensive study of error-correcting codes. The algebraic framework of cyclic and constacyclic codes has made them one of the most extensively studied classes of linear error-correcting codes. Their structural properties enable efficient encoding and decoding procedures, making them suitable for a wide range of communication and storage applications. Therefore, extensive research has been conducted on the structural properties and generator characterisations of these codes over fields and rings. The exploration of codes over finite chain rings has gained prominence as finite chain rings form a superset of finite fields. Later, two-dimensional cyclic and constacyclic codes developed as a natural generalisation of classical cyclic codes over finite fields as well as finite chain rings.

\medskip
In 1977, Imai \cite{imai1977theory} was the first to propose the idea of two-dimensional binary cyclic codes and has since attracted significant attention due to its theoretical richness and practical relevance in multi-dimensional signal processing and data transmission. In 2012, C.Güneri and F. Özbudak \cite{guneri2012relation} showed that two-dimensional cyclic codes are algebraically equivalent to a special class of quasi-cyclic codes. In 2016, a study focused on two-dimensional cyclic codes of length $s.2^k$ and their dual codes over a finite field was carried out by Z. Sepasdar and K. Khashyarmanesh \cite{sepasdar2016characterizations}. In 2017, Z. Sepasdar \cite{sepasdar2017generator} also derived the generator matrix corresponding to two-dimensional cyclic codes of any given length over finite fields. An algebraic structure of some two-dimensional constacyclic codes of length $2p^s.2^k$ was given by Z. Rajabi and K. Khashyarmanesh \cite{rajabi2018repeated} in 2017. In 2021, S.Bhardwaj and M.Raka \cite{bhardwaj2022} analyzed $(\alpha,\beta)$-constacyclic codes of length $sl$ over $\mathbb{F}_q$. In 2023, D.Garg and S.Dutt \cite{garg2024two} obtained the generators of two-dimensional cyclic codes over finite chain rings using primitive idempotents.

\medskip
Motivated by these developments, the present work focuses on the construction of two-dimensional $(\lambda, \mu)$-constacyclic codes of length $\ell\mathrm{m}$ over finite chain rings with residue field $\mathbb{F}_q$, where $q \equiv 1 \pmod{r\mathrm{m}}$ and $r$ denotes the multiplicative order of $\bar{\mu}$.

The remainder of this paper is structured as follows. Necessary definitions and foundational results related to constacyclic codes over finite chain rings are given in section~2. Generators of two-dimensional $(\lambda,\mu)$-constacyclic codes of length $\ell\mathrm{m}$ over finite chain rings with the help of primitive idempotents are obtained in section~3.  A condition for two-dimensional constacyclic codes to be MHDR over a finite chain ring is obtained in section~4. Finally, concluding remarks of the paper are given in section~5.

\section{Preliminaries}

Let $\mathcal{R}$ be a finite commutative ring. A code of length $\ell$ over $\mathcal{R}$ is a nonempty subset of $\mathcal{R}^{\ell}$. A linear code $\mathcal{C}$ of length $\ell$ over $\mathcal{R}$ is an $\mathcal{R}$-submodule of $\mathcal{R}^{\ell}$. Let $\lambda$ be a unit element in $\mathcal{R}$. A linear code $\mathcal{C}$ of length $\ell$ over $\mathcal{R}$ is said to be a $\lambda$-constacyclic code if
$(\eta_0, \eta_1, \dots, \eta_{\ell-1}) \in \mathcal{C}
\;\Rightarrow\;
(\lambda \eta_{\ell-1}, \eta_0, \eta_1, \dots, \eta_{\ell-2}) \in \mathcal{C}$. A $\lambda$-constacyclic code $\mathcal{C}$ is an ideal of the quotient ring $\mathcal{R}[x]/\langle x^{\ell} - \lambda \rangle$ under the map $\psi : \mathcal{R}^{\ell} \to \mathcal{R}[x]/\langle x^{\ell} - \lambda \rangle $ defined as $\psi(\eta_0,\eta_1, \dots, \eta_{\ell-1})=\eta_0+\eta_1x+ \dots+\eta_{\ell-1}x^{\ell-1} \pmod{x^{\ell}-\lambda}$.
For $\lambda=1$, the code is known as a cyclic code.

A nonzero element $\omega \in \mathcal{R}$ is called a primitive $n^{th}$ root of unity if $n$ is the least positive integer for which $\omega^{n} = 1$. If an element $e \in \mathcal{R}$ satisfies the property $e^2=e$, then $e$ is called an idempotent element of $\mathcal{R}$. In addition, in a commutative ring $\mathcal{R}$ with identity, a set of idempotent elements is said to be primitive idempotents when the elements are mutually orthogonal, and their sum equals the identity of $\mathcal{R}$.

Let $\gamma \in \mathcal{R}$, such that $\langle \gamma \rangle$ is the unique maximal ideal of $\mathcal{R}$ generated by $\gamma$. If all ideals of a finite commutative ring $\mathcal{R}$ can be arranged in a single chain under the inclusion relation, then $\mathcal{R}$ is termed as a finite chain ring. Equivalently, $\langle 0 \rangle = \langle \gamma^{\rho} \rangle \subset \langle \gamma^{\rho-1} \rangle \subset \cdots \subset \langle \gamma \rangle \subset \langle \gamma^{0} \rangle = \mathcal{R}$, where $\rho$ is the nilpotency index of $\gamma$. In this case, $\mathcal{R}$ forms a principal ideal ring with $\langle \gamma \rangle$ as its unique maximal ideal. The factor ring $\mathcal{\overline{R}}=\mathcal{R}/\langle \gamma\rangle$ forms a finite field, say $\mathbb{F}_{q}$.The ring homomorphism $\mathcal{R} \to \mathbb{F}_{q}$, defined by $a \mapsto \overline{a}$, extends naturally coefficientwise from $\mathcal{R}[x]$ onto $\mathbb{F}_{q}[x]$. 

\medskip
Let $\mathcal{C}$ be a $\lambda$-constacyclic code of length $\ell$ over $\mathcal{R}$. For each integer $i$ with $0 \le i \le \rho-1$, the $i$-th torsion code of $\mathcal{C}$, denoted by $\mathrm{Tor}_i(\mathcal{C})$, is defined as 
$\mathrm{Tor}_i(\mathcal{C})=\{\,\overline{c(x)}\in \overline{\mathcal{R}}[x]/\langle x^{\ell}- \overline{\lambda} \rangle \mid \gamma^{\,i}c(x) \in \mathcal{C} \,\}$. It is easy to see that $\mathrm{Tor}_i(\mathcal{C})$ is a $\overline{\lambda}$-constacyclic code of length $\ell$ over $\mathbb{F}_q$.

\medskip
A two-dimensional code of length $\ell\mathrm{m}$ over $\mathcal{R}$ is a nonempty subset of $\mathcal{R}^{\ell\mathrm{m}}$ whose codewords are regarded as an $\ell\times \mathrm{m}$ array defined as $\eta=\big(\eta_{i,j}\big),0\le i\le \ell-1,0\le j\le \mathrm{m}-1,\eta_{i,j}\in \mathcal{R}$. Each codeword $\eta=(\eta_{i,j})$ over $\mathcal{R}$ in a two-dimensional code can be uniquely represented by a bivariate polynomial
\[
\eta(x,y)=\sum_{j=0}^{\mathrm{m}-1}\left(\sum_{i=0}^{\ell-1} \eta_{i,j}x^{i}\right)y^{j}
\;\in\; \mathcal{R}[x,y].
\]

Let $\lambda,\mu\in \mathcal{R}^*$ be units of $\mathcal{R}$. We define two shift operators on such arrays as follows. The \emph{row $\lambda$–constacyclic shift} is given by 
\[
\tau_{\lambda}(\eta) =
\begin{pmatrix}
\lambda\,\eta_{\ell-1,0} & \lambda\,\eta_{\ell-1,1} & \cdots & \lambda\,\eta_{\ell-1,\mathrm{m}-1} \\
\eta_{0,0} & \eta_{0,1} & \cdots & \eta_{0,\mathrm{m}-1} \\
\vdots & \vdots & \ddots & \vdots \\
\eta_{\ell-2,0} & \eta_{\ell-2,1} & \cdots & \eta_{\ell-2,\mathrm{m}-1}
\end{pmatrix}
\]
The \emph{column $\mu$–constacyclic shift} is given by
\[
\sigma_{\mu}(\eta) =
\begin{pmatrix}
\mu\,\eta_{0,\mathrm{m}-1} & \eta_{0,0} & \cdots & \eta_{0,\mathrm{m}-2} \\
\mu\,\eta_{1,\mathrm{m}-1} & \eta_{1,0} & \cdots & \eta_{1,\mathrm{m}-2} \\
\vdots & \vdots & \ddots & \vdots \\
\mu\,\eta_{\ell-1,\mathrm{m}-1} & \eta_{\ell-1,0} & \cdots & \eta_{\ell-1,\mathrm{m}-2}
\end{pmatrix}
\]

An $\mathcal{R}$-submodule  $\mathcal{C}$ of $\mathcal{R}^{\ell\mathrm{m}}$ is termed as a \emph{two-dimensional $(\lambda,\mu)$–constacyclic code}
if $\mathcal{C}$ is invariant with respect to both $\tau_{\lambda}$ and $\sigma_{\mu}$, i.e.,
\[
\eta\in C \;\Longrightarrow\; \tau_{\lambda}(\eta)\in C \text{ and } \sigma_{\mu}(\eta)\in C.
\]

A two-dimensional $(\lambda,\mu)$–constacyclic codes over $\mathcal{R}$ is viewed as an ideal of the quotient ring $\mathcal{R}[x,y]\big/\langle x^{\ell}-\lambda,\; y^{\mathrm{m}}-\mu\rangle$. 

\medskip
Let $\mathcal{C}$ be a two-dimensional constacyclic code over the finite chain ring $\mathcal{R}$ of length $\ell \mathrm{m}$, and let $\eta = (\eta_{ij})$ and $\eta' = (\eta'_{ij})$, where $0 \le i \le \ell - 1$ and $0 \le j \le \mathrm{m} - 1$, be two codewords in $\mathcal{C}$. The hamming distance between $\eta$ and $\eta'$ is defined as $d(\eta,\eta') = \sum_{j=0}^{\mathrm{m} - 1} \sum_{i=0}^{\ell - 1} \delta(\eta_{ij},\eta'_{ij})$, where $\delta(\eta_{ij},\eta'_{ij}) = 0$ if $\eta_{ij} = \eta'_{ij}$ and $\delta(\eta_{ij},\eta'_{ij}) = 1$ otherwise. The minimum hamming distance of $\mathcal{C}$ is given by $d(\mathcal{C}) = \min_{\eta \ne \eta'} d(\eta,\eta')$. Further, for a codeword $\eta = (\eta_{ij}) \in \mathcal{C}$, the hamming weight, denoted by $wt(\eta)$, is defined as the number of nonzero entries among the components $\eta_{ij}$. The minimum hamming weight of $\mathcal{C}$ is $wt(\mathcal{C}) = \min_{\eta \in \mathcal{C},\; \eta \ne 0} wt(\eta)$. It follows that $d(\mathcal{C}) = wt(\mathcal{C})$.The rank of $\mathcal{C}$, denoted by $Rank(\mathcal{C})$, is defined as the cardinality of a minimal spanning set of $\mathcal{C}$. The code $\mathcal{C}$ is said to be a maximum hamming distance with respect to rank (MHDR) code if its minimum hamming distance satisfies $d(\mathcal{C}) = \ell \mathrm{m} - Rank(\mathcal{C}) + 1$.

Below are some results that are required later in the paper.

\begin{theorem}\label{thm:first}
\cite{norton2000structure}   Consider a monic polynomial $c(x) \in \mathcal{R}[x]$. If its reduction $\bar{c}(x) \in \mathbb{F}_q[x]$ factors as a product of polynomials $c_1(x),\ldots,c_m(x)$ which are monic and pairwise coprime, then $c(x)$ admits a corresponding factorization into monic polynomials $d_1(x),\ldots,d_m(x)$ which are also coprime in pairs over $\mathcal{R}$.
\end{theorem}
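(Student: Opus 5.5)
This is Hensel's lemma for the finite chain ring $\mathcal{R}$. I would prove it by lifting one coprime factorization at a time along the chain $\mathcal{R}/\langle\gamma\rangle, \mathcal{R}/\langle\gamma^2\rangle, \dots, \mathcal{R}/\langle\gamma^\rho\rangle=\mathcal{R}$.

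\textbf{Reduction to two factors.} I first reduce to $m=2$ by induction on $m$. Write $\bar c = c_1\cdot(c_2\cdots c_m)$, where $c_1$ and $c_2\cdots c_m$ are monic and coprime in $\mathbb{F}_q[x]$. The two-factor case lifts this to $c=d_1 g$. Reducing gives $\bar g = c_2\cdots c_m$, so $g$ is monic with a monic pairwise coprime factorization of its reduction, and induction applies to $g$. For the conclusion I use the following fact: two polynomials in $\mathcal{R}[x]$ whose reductions are coprime are themselves coprime. Indeed, if $\bar u a+\bar v b=1$, then lifting gives $ua+vb=1-\gamma h$. Since $\gamma$ is nilpotent, $1-\gamma h$ is a unit in $\mathcal{R}[x]$, with inverse $\sum_k(\gamma h)^k$, a finite sum. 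Hence $\langle a,b\rangle=\mathcal{R}[x]$. So pairwise coprimality of the $d_i$ only needs to be checked modulo $\gamma$, where it holds because $\bar d_i=c_i$.

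\textbf{Two-factor step.} Suppose $\bar c = c_1c_2$ with $\bar u c_1+\bar v c_2=1$. Choose monic lifts $a_0,b_0$ of $c_1,c_2$, so that $c\equiv a_0b_0 \pmod{\gamma}$. Inductively, suppose $a_k,b_k$ are monic, reduce to $c_1,c_2$, and satisfy $c\equiv a_kb_k \pmod{\gamma^{k+1}}$. Write $c-a_kb_k=\gamma^{k+1}e$. Set $a_{k+1}=a_k+\gamma^{k+1}s$ and $b_{k+1}=b_k+\gamma^{k+1}t$. Then
\[
c-a_{k+1}b_{k+1}\equiv \gamma^{k+1}\bigl(e-sb_k-ta_k\bigr)\pmod{\gamma^{k+2}},
\]
using $2(k+1)\ge k+2$. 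It suffices to solve $\bar s c_2+\bar t c_1=\bar e$ in $\mathbb{F}_q[x]$. To keep monicity and degrees, I take $\bar s$ to be the remainder of $\bar v\bar e$ modulo $c_1$, so $\deg\bar s<\deg c_1$. Then I set $\bar t=(\bar e-\bar s c_2)/c_1$; this division is exact because $c_1\mid \bar e-\bar s c_2$, by the Bézout identity. I then lift $\bar s,\bar t$ to $\mathcal{R}[x]$ with the same degrees.

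\textbf{Main obstacle: monicity.} The hard part is ensuring $\deg t<\deg c_2$, so that $b_{k+1}$ stays monic. I can arrange that $\deg \bar e<\deg c$. This holds because $c$ and $a_kb_k$ are both monic of degree $\deg c_1+\deg c_2=\deg c$; here I use that $\deg c=\deg\bar c$ because $c$ is monic. Then $\deg(\bar s c_2)<\deg c$, so $\deg(\bar e-\bar s c_2)<\deg c$, and dividing by $c_1$ gives $\deg\bar t<\deg c_2$. The perturbations therefore only touch lower-order terms, and $a_{k+1},b_{k+1}$ remain monic.

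At $k=\rho-1$ we have $\gamma^\rho=0$, so $c=a_{\rho-1}b_{\rho-1}$ exactly in $\mathcal{R}[x]$. This is the desired factorization $d_1=a_{\rho-1}$, $d_2=b_{\rho-1}$, and the induction on $m$ completes the proof.
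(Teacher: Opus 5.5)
Your proof is correct and complete. It is the standard linear Hensel lifting along $\mathcal{R}/\langle\gamma^{k+1}\rangle$, one power of $\gamma$ at a time. Coprimality of the lifts comes from the unit $1-\gamma h$, and your degree bookkeeping, with $\deg\bar s<\deg c_1$ and hence $\deg\bar t<\deg c_2$, correctly keeps $a_{k+1}$ and $b_{k+1}$ monic.

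The paper itself gives no proof of this statement. It is quoted from \cite{norton2000structure} as the finite-chain-ring form of Hensel's lemma, so there is no in-paper argument to compare against. Your proof is essentially the classical argument behind that citation.

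A few small points could be tightened:
\begin{itemize}
\item In the coprimality fact you mean $\bar u\,\bar a+\bar v\,\bar b=1$ in $\mathbb{F}_q[x]$.
\item The element $e$ with $\gamma^{k+1}e=c-a_kb_k$ is only determined modulo $\mathrm{Ann}(\gamma^{k+1})=\langle\gamma^{\rho-k-1}\rangle$. For $k\le\rho-2$ this ideal lies in $\langle\gamma\rangle$, so $\bar e$ is well defined. It also automatically has degree $<\deg c$, since the coefficients of $e$ in degrees $\ge\deg c$ are annihilated by $\gamma^{k+1}$. Alternatively, simply choose $e$ coefficientwise with zero coefficients in those degrees.
\item The lifting step is only ever applied for $k=0,\ldots,\rho-2$, which is exactly the range where the previous point holds.
\end{itemize}
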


\begin{theorem}\label{thm:second}
\cite{norton2000structure}    If $g(x) \in \mathcal{R}[X]$ is a monic polynomial such that its reduction is square-free, then $g(x)$ has a unique decomposition into monic basic irreducible factors which are coprime.
\end{theorem}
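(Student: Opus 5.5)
The statement to be proved is Theorem~\ref{thm:second}: if $g(x)\in\mathcal{R}[x]$ is monic with square-free reduction $\bar g(x)$, then $g(x)$ factors uniquely as a product of pairwise coprime monic basic irreducible polynomials. The plan is to obtain existence from Theorem~\ref{thm:first} (Hensel lifting) and uniqueness from a localisation argument in the Chinese remainder decomposition.

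\textbf{Existence.} Since $\mathbb{F}_q[x]$ is a UFD, write $\bar g(x)=c_1(x)\cdots c_m(x)$ with each $c_i$ monic irreducible. Because $\bar g$ is square-free, the $c_i$ are distinct, hence pairwise coprime. Theorem~\ref{thm:first} then lifts this to $g(x)=d_1(x)\cdots d_m(x)$ with the $d_i$ monic and pairwise coprime in $\mathcal{R}[x]$. Comparing degrees, and using that $\bar d_1\cdots\bar d_m=c_1\cdots c_m$, we may reorder so that $\bar d_i=c_i$. Each $\bar d_i$ is irreducible over $\mathbb{F}_q$, so each $d_i$ is basic irreducible, which gives existence.

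\textbf{Uniqueness.} Suppose $g=h_1\cdots h_k$ is another factorization into monic, pairwise coprime, basic irreducible polynomials. Reducing gives $\bar g=\bar h_1\cdots\bar h_k$ with each $\bar h_j$ irreducible. Uniqueness of factorization in $\mathbb{F}_q[x]$ forces $k=m$ and, after reordering, $\bar h_i=\bar d_i$. It remains to show $h_i=d_i$.

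Fix $i$. Since $\bar d_i$ is coprime to $\bar d_j$ for every $j\neq i$, we have $\bar d_i$ coprime to $\prod_{j\ne i}\bar h_j$. Coprimality lifts from $\mathbb{F}_q[x]$ to $\mathcal{R}[x]$: if $a\bar u+b\bar v=1$, then $au+bv=1-n$ with $n\in\gamma\mathcal{R}[x]$ nilpotent, so $1-n$ is a unit. Hence $d_i$ and $H_i:=\prod_{j\neq i}h_j$ are coprime in $\mathcal{R}[x]$, say $ad_i+bH_i=1$. Now $d_i$ divides $g=h_iH_i$, so multiplying $ad_i+bH_i=1$ by $h_i$ gives
\[
h_i = a d_i h_i + b h_i H_i = a d_i h_i + b g,
\]
and $d_i$ divides the right-hand side. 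Thus $d_i\mid h_i$. Both polynomials are monic of the same degree, so $h_i=d_i$. Equivalently, in the Chinese remainder decomposition
\[
\mathcal{R}[x]/\langle g\rangle\cong\prod_i \mathcal{R}[x]/\langle d_i\rangle,
\]
the ideals $\langle d_i\rangle$ and $\langle h_i\rangle$ coincide.

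\textbf{Main obstacle.} The delicate step is lifting coprimality and then using monicity to turn divisibility into equality. This is why I would state the nilpotent-unit observation explicitly before the uniqueness argument.
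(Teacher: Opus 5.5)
The paper gives no proof of this statement; it is quoted from \cite{norton2000structure} as a preliminary, so there is no argument of the paper's to compare yours against. Your proof is correct and follows the standard line. Existence comes from Hensel lifting (Theorem~\ref{thm:first}) of the factorization of $\bar g$ into distinct monic irreducibles. Uniqueness comes from three steps: matching the reductions through unique factorization in $\mathbb{F}_q[x]$; lifting coprimality, using that $1-n$ is a unit for $n\in\gamma\mathcal{R}[x]$; and using monicity to upgrade $d_i\mid h_i$ to $h_i=d_i$.

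Two small points. First, ``comparing degrees'' does not by itself justify $\bar d_i=c_i$. Read without the word ``corresponding'', Theorem~\ref{thm:first} is already satisfied by the trivial factorization $d_1=g$, $d_2=\cdots=d_m=1$, and degree counts cannot rule that out. The identification $\bar d_i=c_i$ is exactly what ``corresponding'' asserts in Hensel's lemma, so cite it directly rather than deriving it from degrees. Second, your uniqueness step never uses the hypothesis that the $h_j$ are pairwise coprime; that property is forced by $\bar g$ being square-free. So you have actually shown uniqueness among all factorizations of $g$ into monic basic irreducible polynomials.
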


Consider the polynomials $f_0(x), f_1(x), \dots, f_s(x)$ arise through an inductive selection process: at each stage, $f_i(x)$ is chosen to be a polynomial of least degree in $\mathcal{C} \setminus \langle f_0(x), \dots, f_{i-1}(x) \rangle$, and among all such least degree polynomials, one whose leading coefficient contains the smallest possible power of $\gamma$, denoted $r_i$. This selection yields a strictly increasing sequence of degrees $n_0 < n_1 < \cdots < n_s$ and a strictly decreasing sequence of powers $r_0 > r_1 > \cdots > r_s$.
The following theorem gives the structure of $\lambda$-constacyclic codes of arbitrary length over finite chain rings.
\begin{theorem}\cite{singh2026structure}\label{thm:cyclicspanning}
    Let $\mathcal{C}$ be a $\lambda$-constacyclic code of length $\ell$ over the finite chain ring $\mathcal{R}$ and $f_i(x); 0 \le i \le s$ be the polynomials as described above, then
    \begin{enumerate}
        \item $\mathcal{C}= \langle f_0(x),f_1(x), \dots , f_s(x) \rangle$. Also $f_i(x)=\gamma^{r_i}h_i(x)$, where $h_i(x)$ is a monic polynomial over the finite commutative chain ring having maximal ideal $\langle \gamma \rangle$ with nilpotency index $\rho -r_i$ and $deg(h_i(x))=n_i$. 
        \item  The minimal spanning set of $\mathcal{C}$ is $\bigcup_{i=0}^{s}\{ f_i(x), xf_i(x), \ldots, x^{n_{i+1}-n_{i}-1}f_i(x)  \}, n_{s+1}=\ell$.
        \item $Rank(\mathcal{C})=\ell-n_0$, where $n_0$ is the degree of polynomial $f_0(x)$.
    \end{enumerate}
\end{theorem}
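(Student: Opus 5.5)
We prove the three parts of Theorem~\ref{thm:cyclicspanning} in order. The idea is to run a division algorithm adapted to the chain structure of $\mathcal{R}$, using the leading coefficients $\gamma^{r_i}u_i$ of the $f_i$. Throughout, $\mathcal{C}$ is identified with an ideal of $\mathcal{R}[x]/\langle x^{\ell}-\lambda\rangle$, and every element is represented by a polynomial of degree $<\ell$.

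\textbf{Step 1: monotonicity of degrees and $\gamma$-powers, and $f_i=\gamma^{r_i}h_i$.}
Write the leading coefficient of $f_i$ as $\gamma^{r_i}u_i$ with $u_i$ a unit.
\begin{itemize}
\item \emph{Degrees increase.} Suppose $n_{i+1}=n_i$ but $r_{i+1}\ge r_i$. Then $f_{i+1}-\gamma^{r_{i+1}-r_i}u_{i+1}u_i^{-1}f_i$ has smaller degree and lies in $\mathcal{C}\setminus\langle f_0,\dots,f_i\rangle$. This contradicts the minimality of $n_{i+1}$ (or of $n_i$, whichever stage is relevant). So $n_0<n_1<\cdots<n_s$.
\item \emph{Powers decrease.} Suppose $r_{i+1}\ge r_i$. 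Then $f_{i+1}-\gamma^{r_{i+1}-r_i}u_{i+1}u_i^{-1}x^{n_{i+1}-n_i}f_i$ has degree $<n_{i+1}$. It lies in $\mathcal{C}$ and not in $\langle f_0,\dots,f_i\rangle$, again a contradiction. Here $x^{n_{i+1}-n_i}$ does not wrap around modulo $x^\ell-\lambda$, since $n_{i+1}<\ell$. So $r_0>r_1>\cdots>r_s$.
\item \emph{The factorisation $f_i=\gamma^{r_i}h_i$.} Every coefficient of $f_i$ must be divisible by $\gamma^{r_i}$. Suppose some coefficient of $f_i$ has $\gamma$-valuation $t<r_i$. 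Multiply by a suitable power $\gamma^{\rho-r_i}$, which kills the leading term, and iterate. This produces an element of $\mathcal{C}$ of smaller degree, or of equal degree with smaller leading $\gamma$-power, outside the earlier span. That is impossible.
\end{itemize}
Hence $f_i=\gamma^{r_i}h_i$, where $h_i$ is monic of degree $n_i$. It is determined modulo $\gamma^{\rho-r_i}$, so it lives naturally over $\mathcal{R}/\langle\gamma^{\rho-r_i}\rangle$, which is a chain ring of nilpotency index $\rho-r_i$.

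I expect this valuation argument to be the main obstacle. Carrying it out cleanly will likely require inducting on the number of coefficients of low valuation.

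\textbf{Step 2: generation (part 1).}
The selection process must terminate, because $\mathcal{C}$ is finite. Termination occurs exactly when $\langle f_0,\dots,f_s\rangle=\mathcal{C}$, which gives part~1.

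\textbf{Step 3: the minimal spanning set (part 2).}
\begin{itemize}
\item \emph{Spanning.} Take $c\in\mathcal{C}$ of degree $d$. Let $i$ be the largest index with $n_i\le d$. The leading coefficient of $c$ is divisible by $\gamma^{r_i}$; otherwise $c$, after subtracting multiples, would have contradicted the choice of $f_i$ or of $f_{i+1}$. Subtract $a\,x^{d-n_i}f_i$, and rewrite $x^{d-n_i}f_i$ using $x^{n_{i+1}-n_i}f_i$. That product differs from a unit times $\gamma^{r_i-r_{i+1}}f_{i+1}$ by a lower-degree element of $\mathcal{C}$, and this reduction is again justified by Step~1. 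Inducting on the pair (degree, valuation) shows that the listed $\sum_i(n_{i+1}-n_i)=\ell-n_0$ elements span $\mathcal{C}$.
\item \emph{Minimality.} The elements $x^jf_i$ have pairwise distinct degrees, covering $n_0,\dots,\ell-1$. Any nontrivial relation among them would have a top-degree term $a\gamma^{r_i}x^{d}=0$, forcing $a\in\langle\gamma^{\rho-r_i}\rangle$. That coefficient can then be absorbed, so no element of the set is redundant. Equivalently, reduce $\gamma^{\rho-1-r_i}$-multiples modulo $\gamma$ and compare with the torsion codes.
\end{itemize}

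\textbf{Step 4: the rank (part 3).}
Counting the spanning set gives $\mathrm{Rank}(\mathcal{C})=\ell-n_0$. To see that no spanning set is smaller, use the socle: $\gamma^{\rho-1-r_0}\mathcal{C}$ reduces to a $\overline{\lambda}$-constacyclic code over $\mathbb{F}_q$ of dimension $\ell-n_0$, and rank cannot decrease under this map.
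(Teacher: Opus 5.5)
\textbf{The gap is the third bullet of your Step~1}, the divisibility $f_i=\gamma^{r_i}h_i$, which you flag yourself. It is not peripheral. It is half of part~1, and Step~3 rests on it in two places. First, the leading coefficient of $c$ is divisible by $\gamma^{r_i}$ only because $c\in\langle f_0,\dots,f_i\rangle$ and every element of that ideal is a multiple of $\gamma^{r_i}$. Second, your ``absorption'' in the minimality argument needs $a\gamma^{r_i}=0\Rightarrow a\,x^jf_i=0$.

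Multiplying by powers of $\gamma$ cannot produce the contradiction for $i\ge 1$. A factor $\gamma^k$ with $k<\rho-r_i$ keeps degree $n_i$ but raises the leading valuation. A factor $\gamma^{\rho-r_i}$ drops the degree below $n_i$, and then minimality of $n_i$ forces the product to lie \emph{inside} $\langle f_0,\dots,f_{i-1}\rangle$ --- the opposite of what you assert. Inductively, all this yields is that an offending coefficient of valuation $t<r_i$ satisfies $\rho-r_i+t\ge r_{i-1}$, which is compatible with $t<r_i$ (take $t=r_i-1$). The argument works only for $i=0$, where $\mathcal{C}$ has no nonzero element of degree $<n_0$.

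\textbf{The missing idea} is to multiply $f_i$ by a \emph{unit}, which leaves the generated ideal unchanged. Write $f_i=\gamma^t g$ with $t<r_i$ the least valuation of a coefficient. Then $\bar g\ne0$ and $\deg\bar g<n_i$, since $g$ may be chosen with $x^{n_i}$-coefficient $\gamma^{r_i-t}u_i\in\langle\gamma\rangle$. Over a finite chain ring, such a $g$ factors as $g=vh$ with $v$ a unit of $\mathcal{R}[x]$ and $h$ monic of degree $\deg\bar g$ (Weierstrass preparation). Elementarily: divide the part of $g$ above degree $\deg\bar g$, which lies in $\gamma\mathcal{R}[x]$, by the truncation of $g$ at that degree, whose leading coefficient is a unit. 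The quotient $q$ and the remainder lie in $\gamma\mathcal{R}[x]$, so multiplying by $(1+q)^{-1}$ pushes the high part into $\gamma^2\mathcal{R}[x]$; repeat until it vanishes. Then $\gamma^th=v^{-1}f_i\in\mathcal{C}$ is nonzero of degree $<n_i$, hence lies in $\langle f_0,\dots,f_{i-1}\rangle$. So does $f_i=v\gamma^th$, which is a contradiction.

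With this in place, Steps~2 and~3 go through. Spanning is even simpler than you make it: for $n_i\le\deg c<n_{i+1}$, the element $x^{\deg c-n_i}f_i$ is already in your list.

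Step~4 is then unnecessary, and as written it is off. When $s\ge1$, $\gamma^{\rho-1-r_0}\mathcal{C}$ contains $\gamma^{\rho-1-r_0+r_s}h_s$, which is not annihilated by $\gamma$, so it is not an $\mathbb{F}_q$-code. Instead, once no element of your list is redundant, Nakayama's lemma gives $\mathrm{Rank}(\mathcal{C})=\ell-n_0$, since every minimal generating set of $\mathcal{C}$ has $\dim_{\mathbb{F}_q}\mathcal{C}/\gamma\mathcal{C}$ elements.
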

In the following theorem, generators of two-dimensional constacyclic codes over finite fields are obtained.

\begin{theorem}\label{thm:ccfield}
\cite{bhardwaj2022}    Let $\mathcal{R}=\mathbb{F}_q[x,y]/\langle x^{\ell}-\lambda,\; y^{\mathrm{m}}-\mu\rangle$ and let $\mathcal{C}$ be an ideal of $\mathcal{R}$. For each ${k}=0,1,\ldots,\mathrm{m}-1$, define $\mathcal{I}_{k}=\{\,f(x)\in \mathbb{F}_q[x]/\langle x^{\ell}-\lambda\rangle \mid \mathsf{\zeta}_{k}(y)f(x)\in \mathcal{C}\,\}$. Thus, for each ${k}$, the ideal $\mathcal{I}_{k}$ is a principal and admits a unique monic generator $f_{k}(x)$ dividing $x^{\ell}-\lambda$, and the ideal $\mathcal{C}$ can be written as $\mathcal{C}=\langle \zeta_0(y)f_0(x),\; \zeta_1(y)f_1(x),\; \ldots,\; \zeta_{\mathrm{m}-1}(y)f_{\mathrm{m}-1}(x)\rangle$, such that $\zeta_{k}(y) ; 0 \le {k} \le \mathrm{m}-1$ form the set of primitive idempotents associated with the ring $\mathbb{F}_q[x]/\langle y^{\mathrm{m}} - \mu \rangle$. $Dim(\mathcal{C})=\ell\mathrm{m}- \sum_{k=0}^{\mathrm{m}-1}n_k$, where $n_k=deg(f_k(x))$.
\end{theorem}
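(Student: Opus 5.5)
The plan is a CRT decomposition of the bivariate ring along $y$. Write $S=\mathbb{F}_q[x]/\langle x^{\ell}-\lambda\rangle$. Then
\[
\mathbb{F}_q[x,y]/\langle x^{\ell}-\lambda,\, y^{\mathrm{m}}-\mu\rangle \cong S[y]/\langle y^{\mathrm{m}}-\mu\rangle .
\]
Since $q\equiv 1 \pmod{r\mathrm{m}}$, the field $\mathbb{F}_q$ contains a primitive $r\mathrm{m}$-th root of unity. As $\mu^{r}=1$, there is $\beta\in\mathbb{F}_q$ with $\beta^{\mathrm{m}}=\mu$ (choose $\beta=\xi^{j}$ for a generator $\xi$ of the cyclic group of $r\mathrm{m}$-th roots of unity, with $j$ suitable). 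Consequently
\[
y^{\mathrm{m}}-\mu=\prod_{k=0}^{\mathrm{m}-1}(y-\beta\omega^{k}),
\]
where $\omega$ is a primitive $\mathrm{m}$-th root of unity, and the roots are distinct because $\gcd(\mathrm{m},q)=1$.

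\textbf{Step 1: the idempotents.} I would write down the primitive idempotents
\[
\zeta_k(y)=\frac{1}{\mathrm{m}}\sum_{j=0}^{\mathrm{m}-1}(\beta\omega^{k})^{-j}y^{j},
\]
that is, the Lagrange interpolants. I would check directly that $\zeta_k\zeta_{k'}=\delta_{kk'}\zeta_k$, that $\sum_k\zeta_k=1$, and that $y\,\zeta_k=\beta\omega^k\zeta_k$ modulo $y^{\mathrm{m}}-\mu$. These facts give the ring decomposition
\[
S[y]/\langle y^{\mathrm{m}}-\mu\rangle=\bigoplus_k \zeta_k(y)\,S,
\]
with $\zeta_k(y)S\cong S$ via $\zeta_k f\mapsto f$. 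The map is injective because $\zeta_k f=0$ forces $f=0$: the coefficient of $y^0$ in $\zeta_k f$ is $f/\mathrm{m}$.

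\textbf{Step 2: decomposing $\mathcal{C}$.} For an ideal $\mathcal{C}$, each $c\in\mathcal{C}$ satisfies $c=\sum_k\zeta_k c$ with $\zeta_k c\in\mathcal{C}$. Using $y\zeta_k=\beta\omega^k\zeta_k$, I would write $\zeta_k c=\zeta_k g_k(x)$, where $g_k(x)$ is obtained by substituting $y=\beta\omega^k$ into $c$. Then $g_k\in\mathcal{I}_k$, so $\mathcal{C}=\bigoplus_k\zeta_k\mathcal{I}_k$. The set $\mathcal{I}_k$ is an ideal of $S$: closure under multiplication by $x$ follows from $\mathcal{C}$ being an ideal. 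Since $S$ is a principal ideal ring whose ideals correspond to monic divisors of $x^{\ell}-\lambda$, $\mathcal{I}_k=\langle f_k\rangle$ with $f_k$ the unique monic divisor. Hence $\mathcal{C}=\langle \zeta_k f_k : 0\le k\le \mathrm{m}-1\rangle$.

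\textbf{Step 3: dimension.} Because the sum is direct and $\zeta_k S\cong S$, I get $\dim\mathcal{C}=\sum_k\dim_{\mathbb{F}_q}\langle f_k\rangle=\sum_k(\ell-n_k)$.

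The only delicate point is the existence of $\beta$ with $\beta^{\mathrm{m}}=\mu$, which is needed for $y^{\mathrm{m}}-\mu$ to split into distinct linear factors. The hypothesis $q\equiv1 \pmod{r\mathrm{m}}$ is exactly what provides it. I would also have to confirm that $\mathrm{m}$ is invertible in $\mathbb{F}_q$, which again follows from $\mathrm{m}\mid q-1$.
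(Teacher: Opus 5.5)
Your proposal is correct and follows essentially the same route as the paper. The paper only cites this field-level result from \cite{bhardwaj2022}, but its own chain-ring analogues (the Lagrange idempotents $\psi_k$, Lemma~\ref{lem:units}, and Theorem~\ref{thm:structure}) rest on the same ingredients you use: $y\zeta_k=\beta\omega^{k}\zeta_k$, the identity $c\,\zeta_k=c(x,\beta\omega^{k})\,\zeta_k$, and $\sum_k\zeta_k=1$. The one mild difference is the dimension count: you get it from the direct sum $\mathcal{C}=\bigoplus_k\zeta_k\mathcal{I}_k$ and the injectivity of $f\mapsto\zeta_k f$, which is cleaner than the minimal-spanning-set and degree-comparison argument the paper uses in Theorem~\ref{thm:rank}.
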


\section{Two-dimensional constacyclic codes over finite chain rings}
Let $\mathcal{R}$ be a finite chain ring with residue field $\mathbb{F}_q$. In this section, we derive the generators of two-dimensional $(\lambda,\mu)$-constacyclic codes over $\mathcal{R}$ of length $\ell\mathrm{m}$ with $q\equiv 1 \pmod{r\mathrm{m}}$, where $r$ is multiplicative order of $\bar{\mu}$.
\subsection{Primitive idempotents in a finite chain ring} 

\medskip
Let us begin with the following lemma, in which a set of primitive idempotents of the ring $\mathcal{R}[y]/ \langle y^{\mathrm{m}}- \mu \rangle$ is determined.
\begin{lemma}
Assume that $\mathcal{R}$ is a finite chain ring and $\mu$ is a unit element of $\mathcal{R}.$ Then, the polynomials
\begin{equation*}
    \begin{aligned}
        \psi_k(y)= \prod_{\substack{j=0\\ j\ne k}}^{\mathrm{m}-1} \frac{y - \theta^{1+jr}}{\theta^{1+kr} - \theta^{1+jr}}
\end{aligned}
\end{equation*}
form a set of primitive idempotents associated with the ring $\mathcal{R}[y]/\langle y^{\mathrm{m}} - \mu\rangle$ for $ 0 \le k \le \mathrm{m} - 1$ i.e. $\psi_j(y)\psi_k(y)=0$ whenever $j \ne k$ and $\psi_k^2(y)=\psi_k(y)$ in the ring $\mathcal{R}[y]/\langle y^{\mathrm{m}} - \mu \rangle$. Also $\sum_{k=0}^{\mathrm{m} - 1} \psi_k(y) = 1$.
\end{lemma}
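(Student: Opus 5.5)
The plan is to realize $y^{\mathrm{m}}-\mu$ as a product of distinct linear factors over $\mathcal{R}$ and then apply Lagrange interpolation together with the Chinese remainder theorem. The element $\theta$ in the statement is implicitly an element of $\mathcal{R}$ with $\theta^{r\mathrm{m}}=1$ of exact order $r\mathrm{m}$ and $\theta^{\mathrm{m}}=\mu$. I would first justify its existence.

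Since $q\equiv 1 \pmod{r\mathrm{m}}$, the cyclic group $\mathbb{F}_q^*$ contains an element $\bar\theta$ of order $r\mathrm{m}$. The subgroup of $\mathbb{F}_q^*$ of order $r$ is generated by $\bar\theta^{\mathrm{m}}$ and contains $\bar\mu$, so after replacing $\bar\theta$ by a suitable power coprime to $r\mathrm{m}$ we may take $\bar\theta^{\mathrm{m}}=\bar\mu$.

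Then $y^{\mathrm{m}}-\bar\mu=\prod_{j=0}^{\mathrm{m}-1}(y-\bar\theta^{1+jr})$. The roots are distinct because $\bar\theta^{jr}$, $0\le j\le \mathrm{m}-1$, are the $\mathrm{m}$ distinct $\mathrm{m}$-th roots of unity; here $\gcd(\mathrm{m},q)=1$ follows from $\mathrm{m}\mid q-1$.

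Now lift to $\mathcal{R}$. By Theorem~\ref{thm:first} and Theorem~\ref{thm:second}, the monic polynomial $y^{\mathrm{m}}-\mu$ factors over $\mathcal{R}$ as $\prod_j (y-\alpha_j)$ with pairwise coprime linear factors lifting $y-\bar\theta^{1+jr}$. We then interpret $\theta^{1+jr}$ as these Hensel lifts $\alpha_j$; one could lift $\theta$ itself via the Teichmüller lift, but matching $\theta^{\mathrm{m}}=\mu$ exactly may fail. Each difference $\alpha_k-\alpha_j$ ($j\ne k$) reduces to $\bar\theta^{1+kr}-\bar\theta^{1+jr}\neq 0$, so it is a unit of $\mathcal{R}$, and $\psi_k$ is well defined.

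The idempotent properties then follow from the Lagrange identities:
\begin{enumerate}
\item $\psi_k(\alpha_i)=\delta_{ik}$ by construction.
\item $\psi_j\psi_k$ for $j\ne k$ is divisible by every factor $y-\alpha_i$, hence by their product $y^{\mathrm{m}}-\mu$. The factors are pairwise coprime, i.e.\ the differences are units, so divisibility by each factor gives divisibility by the product.
\item $\psi_k^2-\psi_k=\psi_k(\psi_k-1)$ vanishes at every $\alpha_i$, so it is likewise divisible by $y^{\mathrm{m}}-\mu$.
\item $\sum_k\psi_k-1$ has degree at most $\mathrm{m}-1$ and vanishes at all $\mathrm{m}$ points $\alpha_i$ whose pairwise differences are units. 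A Vandermonde argument (the determinant $\prod(\alpha_k-\alpha_j)$ is a unit) shows it is identically zero.
\end{enumerate}

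The key fact used in steps 2 and 3 is: if $y-\alpha$ and $y-\beta$ with $\alpha-\beta$ a unit both divide $g$ in $\mathcal{R}[y]$, then so does their product. This holds because $g=(y-\alpha)h$ and $0=g(\beta)=(\beta-\alpha)h(\beta)$ forces $h(\beta)=0$.

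The main obstacle is the existence and normalization of $\theta$: the statement does not define it, so the proof must construct roots of $y^{\mathrm{m}}-\mu$ in $\mathcal{R}$ of the form $\theta^{1+jr}$. This is where the hypothesis $q\equiv1\pmod{r\mathrm{m}}$ and Hensel lifting (Theorems~\ref{thm:first}–\ref{thm:second}) are essential. Once the distinct unit-difference roots are in hand, the rest is routine.
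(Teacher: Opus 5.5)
Your proof is correct and uses the same mechanism as the paper: split $y^{\mathrm{m}}-\mu$ into $\mathrm{m}$ linear factors whose roots differ by units, then take Lagrange basis polynomials. The way you obtain the roots is genuinely different, and yours is the version that actually holds.

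The paper takes an arbitrary $\theta\in\mathcal{R}$ with $\bar\theta=\omega$ and asserts that $\theta^{\mathrm{m}}=\mu$ and that $\theta$ is a primitive $r\mathrm{m}$-th root of unity. This fails for a general lift: over $\mathbb{Z}_{125}$, the lift $\theta=2$ of $\omega=2$ gives $2^4=16\neq 1$. Worse, as you suspected, for $\mathrm{m}\ge 2$ no choice of $\theta$ can work unless $\mu^{r}=1$ in $\mathcal{R}$. Indeed, $\theta^{\mathrm{m}}=\mu$ and $(\theta^{1+r})^{\mathrm{m}}=\mu$ together force $\mu^{r}=1$, whereas the hypothesis only gives $\bar\mu^{r}=1$.

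The paper's own example over $\mathbb{Z}_{81}$ with $\mu=28$ is such a case. There $r=1$ and the roots of $y^2-28$ are $\pm 26$. No $\theta$ satisfies $\{\theta,\theta^{2}\}=\{26,55\}$, since both candidates square to $28$. The idempotents $41+67y$ and $41+14y$ listed there are exactly the Lagrange polynomials at your Hensel lifts.

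So your Hensel-lifting step via Theorems~\ref{thm:first} and~\ref{thm:second} repairs the argument, and it loses nothing. When $\mu^{r}=1$, $\mu$ lies in the Teichm\"uller subgroup, and the Teichm\"uller lift $\theta$ of $\bar\theta$ satisfies $\theta^{\mathrm{m}}=\mu$. By uniqueness of lifts of simple roots, your $\alpha_j$ are then literally $\theta^{1+jr}$. You could state this explicitly: it shows you recover the statement as written whenever it is meaningful.

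You also supply what the paper leaves unjustified:
\begin{itemize}
\item why $y^{\mathrm{m}}-\mu$ divides $\psi_k(\psi_k-1)$, via the factor theorem plus unit differences;
\item orthogonality, which is even more immediate than you make it, since $\psi_j\psi_k$ visibly contains every factor $y-\alpha_i$;
\item an actual proof that $\sum_k\psi_k=1$, where the paper only records $\psi_k(\theta^{1+kr})=1$.
\end{itemize}
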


\begin{proof}
Let $\omega \in \mathbb{F}_q$ be a primitive $r\mathrm{m}^{th}$ root of unity such that $\omega^{\mathrm{m}}=\bar{\mu}$. It is easy to see that such an $\omega$ exists. Further, pick an element $\theta \in \mathcal{R}$ such that $\bar{\theta}=\omega$. It can be easily seen that $\theta^{\mathrm{m}}=\mu$. As $\theta$ is primitive $r\mathrm{m}^{th}$ root of unity $\theta,\theta^{1+r},\theta^{1+2r}, \dots, \theta^{1+(\mathrm{m}-1)r}$ are all distinct. Also, all of these satisfy the equation $y^{\mathrm{m}}=\mu$. Therefore, $\theta^{1+kr},~~0\le k \le \mathrm{m}-1$ are all $\mathrm{m}$ roots of the polynomial $y^{\mathrm{m}}-\mu$ of degree $\mathrm{m}$ and therefore, we have the factorization
    \begin{align*}
        y^{\mathrm{m}}-\mu&= (y-\theta)(y-\theta^{1+r})(y-\theta^{1+2r})...(y-\theta^{1+(\mathrm{m}-1)r})\\
                 &=\prod_{k=0}^{\mathrm{m}-1} (y-\theta^{1+kr})
    \end{align*}
Now consider the polynomials
\begin{equation*}
    \begin{aligned}
        \psi_k(y)= \prod_{\substack{j=0\\ j\ne k}}^{\mathrm{m}-1} \frac{y - \theta^{1+jr}}{\theta^{1+kr} - \theta^{1+jr}}~~~~ \text{ for } ~~ 0 \le k \le \mathrm{m} - 1.
\end{aligned}
\end{equation*}
Clearly $\psi_k(\theta^{1+jr})=0 ~~\text{for}~~ 0 \le j \le \mathrm{m}-1, k\ne j$ and $\psi_k(\theta^{1+kr})=1$. Therefore, $\sum_{k=0}^{\mathrm{m} - 1} \psi_k(\theta^{1+kr}) = 1$. Also $y^{\mathrm{m}}-\mu$ is a factor of $\psi_k(y)(\psi_k(y) - 1)$ implies that $\psi_k(y)(\psi_k(y) - 1)\equiv0 \pmod{y^{\mathrm{m}}-\mu}$ and therefore $\psi_k^2(y) = \psi_k(y)$ in $\mathcal{R}[y]/ \langle y^{\mathrm{m}}-\mu \rangle$. Therefore, we can conclude that the polynomials $\psi_0(y),\psi_1(y),...\psi_{\mathrm{m}-1}(y)$ form a set of primitive idempotents in the ring $\mathcal{R}[y]/(y^{\mathrm{m}}-\mu)$.
\end{proof}

\begin{lemma}\label{lem:units}
    $y^j\psi_k(y)=(\theta^{1+kr})^j\psi_k(y)$ ~~ \text{for} ~~ $0\leq k,j\leq \mathrm{m}-1$.
\end{lemma}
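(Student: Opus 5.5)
The plan is to reduce the statement to the case $j=1$ and then iterate. Write $\alpha_k=\theta^{1+kr}$ for brevity. By the factorization of $y^{\mathrm{m}}-\mu$ obtained in the proof of the previous lemma, $y^{\mathrm{m}}-\mu=\prod_{i=0}^{\mathrm{m}-1}(y-\alpha_i)$. By definition, $\psi_k(y)=c_k\prod_{i\ne k}(y-\alpha_i)$, where $c_k=\prod_{i\ne k}(\alpha_k-\alpha_i)^{-1}$ is a constant in $\mathcal{R}$.

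The key step is the identity
\begin{equation*}
(y-\alpha_k)\psi_k(y)=c_k\prod_{i=0}^{\mathrm{m}-1}(y-\alpha_i)=c_k\,(y^{\mathrm{m}}-\mu)\equiv 0 \pmod{y^{\mathrm{m}}-\mu}.
\end{equation*}
Hence $y\psi_k(y)=\alpha_k\psi_k(y)$ in $\mathcal{R}[y]/\langle y^{\mathrm{m}}-\mu\rangle$. Induction on $j$ then gives
\begin{equation*}
y^{j}\psi_k(y)=y\bigl(y^{j-1}\psi_k(y)\bigr)=\alpha_k^{j-1}\,y\psi_k(y)=\alpha_k^{j}\psi_k(y),
\end{equation*}
with the case $j=0$ trivial. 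This establishes the claim for all $j\ge 0$, and in particular for $0\le j\le \mathrm{m}-1$.

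The only step needing care is that $c_k$ actually makes sense in $\mathcal{R}$, that is, each difference $\alpha_k-\alpha_i$ with $i\ne k$ is a unit; this is already implicit in the definition of $\psi_k$ in the previous lemma. To justify it, I would reduce modulo $\gamma$: $\overline{\alpha_k-\alpha_i}=\omega^{1+kr}-\omega^{1+ir}$. This is nonzero because $\omega$ has order $r\mathrm{m}$ and $0<|k-i|r<r\mathrm{m}$. An element of a finite chain ring lying outside the maximal ideal $\langle\gamma\rangle$ is a unit, so the difference is invertible. I would also rely on the factorization $y^{\mathrm{m}}-\mu=\prod_i(y-\alpha_i)$ over $\mathcal{R}$ exactly as stated in the previous lemma. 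If it needed independent justification, I would use Theorem~\ref{thm:first} (Hensel lifting of the coprime linear factors of $y^{\mathrm{m}}-\bar\mu$), choosing $\theta$ so that $\theta^{\mathrm{m}}=\mu$.
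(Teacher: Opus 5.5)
Your argument is correct and is essentially the paper's proof. Both kill $(y-\alpha_k)\psi_k(y)$ in $\mathcal{R}[y]/\langle y^{\mathrm{m}}-\mu\rangle$ using the factorization from the previous lemma and then induct on $j$; your identity $(y-\alpha_k)\psi_k(y)=c_k(y^{\mathrm{m}}-\mu)$ and your check that each $\alpha_k-\alpha_i$ is a unit spell out what the paper calls clear. One caution, about your fallback only: with $\theta^{\mathrm{m}}=\mu$ one gets $(\theta^{1+ir})^{\mathrm{m}}=\mu^{1+ir}$, so Hensel lifting justifies the factorization $\prod_i(y-\theta^{1+ir})$ only when $\mu^{r}=1$ in $\mathcal{R}$ (it fails, e.g., for $\mathcal{R}=\mathbb{Z}_{25}$, $\mu=6$, $\mathrm{m}=2$), and the lemma as stated depends on that same condition.
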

\begin{proof}
We establish the result via induction on $j$. 

Consider the polynomial $(y - \theta^{1+kr})\psi_k(y)$. Clearly
$(y - \theta^{1+kr})\psi_k(y) = 0$ in $\mathcal{R}[y]/ \langle y^{\mathrm{m}}-\mu \rangle$.
which implies that $y\psi_k(y) = \theta^{1+kr} \psi_k(y)$, i.e., the claim is valid for $j=1$.

Assume that the result is valid for $j=n$ where $n$ is some positive integer, i.e., $y^n\psi_k(y) = (\theta^{1+kr})^n\psi_k(y)$. Now $y^{n+1}\psi_k(y)= y(y^n\psi_k(y))=y((\theta^{1+kr})^n\psi_k(y))=(\theta^{1+kr})^n(y\psi_k(y))=(\theta^{1+kr})^n(\theta^{1+kr}\psi_k(y))=(\theta^{1+kr})^{n+1}\psi_k(y)$.
Thus, the result holds for $j=n+1$ whenever it holds for $j=n$. Hence, by the principle of mathematical induction, the result is valid for all $j \in \{0, 1, 2, \dots, \mathrm{m}-1\}$.
\end{proof}
Examples illustrating the above results are given below.
\begin{example}\label{ex:third}
Let $\mathcal{R}=\mathbb{Z}_{125}$ with the residue field $\mathbb{F}_{5}$ with nilpotency index $3.$ Consider the ring $\mathcal{R}[y]/ \langle y^4-1 \rangle$, it is clear that $\mathrm{m}=4, \mu=1$ and order of $\mu$ is $1.$ Then $\theta= 57$ is the $4^{th}$ root of unity such that $57^4=1.$ Then the polynomial $y^{4}-1$ factors over $\mathbb{Z}_{125}$ as
\[
\begin{aligned}
y^{4}-1
&=\prod_{k=0}^{3}(y-57^{1+i})\\
&=(y-57)(y-124)(y-68)(y-1).
\end{aligned}
\]
Hence, the primitive idempotents are:
\[
\begin{aligned}
\psi_0(y)&= 94+17y+31y^2+108y^3,\\
\psi_1(y)&= 94+31y+94y^2+31y^3,\\
\psi_2(y)&= 94+108y+31y^2+17y^3,\\
\psi_3(y)&= 94+94y+94y^2+94y^3.
\end{aligned}
\]

\end{example}

\begin{example}\label{ex:fourth}
Let $\mathcal{R}=\mathbb{Z}_{169}$ with the residue field $\mathbb{F}_{13}$ with nilpotency index $2.$ Consider the ring $\mathcal{R}[y]/ \langle y^6-168 \rangle$, it is clear that $\mathrm{m}=6, \mu=168$ and order of $\mu$ is $2.$ Then $\theta= 89$ is the $12^{th}$ root of unity such that $89^6=168.$ Then the polynomial $y^{6}-168$ factors over $\mathbb{Z}_{169}$ as
\[
\begin{aligned}
y^{6}-168
&=(y-89)(y-89^3)(y-89^5)(y-89^7)(y-89^9)(y-89^{11})\\
&=(y-89)(y-70)(y-150)(y-80)(y-99)(y-19).
\end{aligned}
\]
Hence, the primitive idempotents are:
\[
\begin{aligned}
\psi_0(y)&= 141+144y+32y^2+101y^3+60y^4+126y^5,\\
\psi_1(y)&= 141+101y+28y^2+68y^3+141y^4+101y^5,\\
\psi_2(y)&= 141+126y+109y^2+101y^3+137y^4+144y^5,\\
\psi_3(y)&= 141+25y+32y^2+68y^3+60y^4+43y^5,\\
\psi_4(y)&= 141+68y+28y^2+101y^3+141y^4+68y^5,\\
\psi_5(y)&= 141+43y+109y^2+68y^3+137y^4+25y^5.
\end{aligned}
\]
\end{example}

\subsection{Generators of two-dimensional constacyclic codes over a finite chain ring}
Consider a two-dimensional $(\lambda,\mu)$-constacyclic code $\mathscr{C}$ having length $\ell\mathrm{m}$ over a finite chain ring $\mathcal{R}$. The code $\mathscr{C}$ can be expressed as an ideal of the ring $\mathcal{R}[x,y]/ \left\langle x^{\ell} - \lambda,y^{\mathrm{m}} - \mu \right\rangle$ and vice versa. Define the sets
\[
\begin{aligned}
J_k &= \{\, f(x) \in \mathcal{R}[x]/\langle x^{\ell} - \lambda\rangle \; | \; \psi_k(y)f(x) \in \mathscr{C} \,\},~~~~~ \text{for} ~~ 0 \le k \le \mathrm{m}-1
\end{aligned}
\]
Clearly $J_k,0 \le k \le \mathrm{m}-1$ are ideals of the ring $\mathcal{R}[x]/\left\langle x^{\ell}-\lambda\right\rangle$. Therefore, $J_k,0 \le k \le \mathrm{m}-1$ are $\lambda$-constacyclic codes of length $\ell$ over the ring $\mathcal{R}$.  By theorem~\ref{thm:cyclicspanning}, each $J_k$ is generated by the polynomials $f_0^{(k)}(x),f_1^{(k)}(x),...,f_{s_{k}}^{(k)}(x)$ such that $deg(f_i^{(k)}(x))=n_i^{(k)}$ for $0 \le k \le \mathrm{m}-1$. 

In the following theorem, we determine the generators of two-dimensional $(\lambda,\mu)$-constacyclic codes of length $\ell\mathrm{m}$ over $\mathcal{R}$. 
\begin{theorem}\label{thm:structure}
Let $\mathscr{C}$ be a two-dimensional $(\lambda,\mu)$-constacyclic code of length $\ell\mathrm{m}$ over a finite chain ring $\mathcal{R}$ that has a residue field $F_q$, satisfying $q \equiv 1 \pmod{r\mathrm{m}}$, where $r$ is multiplicative order of $\bar{\mu}$. Then, the set of generators for the code $\mathscr{C}$ is given by 
\[
\left\{\, \psi_k(y)\, f_i^{(k)}(x) \;\middle|\; 0 \le i \le s_{k},\; 0 \le k \le \mathrm{m}-1 \,\right\},
\]
where $\psi_k(y)$ are the primitive idempotents of the ring $\mathcal{R}[y]/\langle y^{\mathrm{m}}-\mu \rangle$ for $0 \le k \le \mathrm{m}-1$ and the polynomials $f_i^{(k)}(x)$, $0 \le i \le s_{k}$, generate the $\lambda$-constacyclic code $J_k = \left\{\, f_i^{(k)}(x) \in \mathcal{R}[x]/\langle x^{\ell}-\lambda \rangle \;\middle|\; \psi_k(y) f_i^{(k)}(x) \in \mathscr{C} \,\right\}.$
\end{theorem}
\begin{proof}
Consider an arbitrary element $h(x,y)\in \mathscr{C}$. It can be expressed as follows:
\begin{align*}
    h(x,y)=\sum_{j=0}^{\mathrm{m}-1}h_j(x)y^j,
\end{align*}
where each $h_j(x)\in \mathcal{R}[x]/\langle x^{\ell}-\lambda\rangle$ for $0 \le j \le \mathrm{m}-1$. Then
\begin{align*}
h(x,y)\psi_k(y)&=\Big(\sum_{j=0}^{\mathrm{m}-1}h_j(x)y^j\Big)\psi_k(y)\\
&=h_0(x)\psi_k(y)+h_1(x)y\psi_k(y)+\cdots+h_{\mathrm{m}-1}(x)y^{\mathrm{m}-1}\psi_k(y)
\end{align*}
Using lemma~\ref{lem:units},
\begin{align*}
h(x,y)\psi_k(y)=h_0(x)\psi_k(y)+h_1(x)(\theta^{1+kr})\psi_k(y)+\cdots+h_{\mathrm{m}-1}(x)(\theta^{1+kr})^{\mathrm{m}-1}\psi_k(y).
\end{align*}
\begin{align*}
h(x,y)\psi_k(y)= h(x,\theta^{1+kr})\psi_k(y),
\end{align*}
Since $\mathscr{C}$ is an ideal and $h(x,y)\in\mathscr{C}$, we have
$h(x,y)\psi_k(y)\in\mathscr{C}$ and therefore, by the definition of $J_k$, $h(x,\theta^{1+kr}) \in J_k = \langle f_0^{(k)}(x),f_1^{(k)}(x), \dots , f_{s_k}^{(k)}(x) \rangle$. Thus, we can find polynomials $a_i^{(k)}(x)$ in
$\mathcal{R}[x]/\langle x^{\ell}-\lambda\rangle$ satisfying
\begin{align*}
h(x,\theta^{1+kr})=\sum_{i=0}^{\mathrm{m}-1} a_i^{(k)}(x)\,f_i^{(k)}(x).
\end{align*}
Next, using $\sum_{k=0}^{\mathrm{m}-1}\psi_k(y)=1$,
\begin{align*}
h(x,y) &=h(x,y)\cdot 1 \\
&=h(x,y)\sum_{k=0}^{\mathrm{m}-1}\psi_k(y)\\
&=\sum_{k=0}^{\mathrm{m}-1} h(x,y)\psi_k(y)\\
&=\sum_{k=0}^{\mathrm{m}-1} h(x,\theta^{1+kr})\psi_k(y)\\
&= \sum_{k=0}^{\mathrm{m}-1} \sum_{i=0}^{s_k} a_i^{(k)}(x)f_i^{(k)}(x)\psi_k(y).
\end{align*}
Thus, every $h(x,y)\in\mathscr{C}$ is generated by $\Big\{\psi_k(y)f_i^{(k)}(x) \mid 0 \le i \le s_k, \ 0\le k\le \mathrm{m}-1\Big\}$.
Hence,
\[
\mathscr{C}\subseteq
\left\langle\, \psi_k(y)f_i^{(k)}(x) \mid 0 \le i \le s_k, \ 0\le k\le \mathrm{m}-1\right\rangle.
\]
Also note that for each $k$ and each generator $f_i^{(k)}(x)\in J_k$,
we have $\psi_k(y)f_i^{(k)}(x)\in \mathscr{C}$ by the definition of $J_k$.
Therefore, the ideal generated by these elements is contained in $\mathscr{C}$. Hence,
\[
\mathscr{C} =
\left\langle\, \psi_k(y)f_i^{(k)}(x) \mid 0 \le i \le s_k, \ 0\le k\le \mathrm{m}-1\right\rangle.
\]

\end{proof}

\begin{theorem}\label{thm:rank}
        Let $\mathscr{C}$ be a two-dimensional constacyclic code of length $\ell\mathrm{m}$ over $\mathcal{R}$ that has a residue field $F_q$, satisfying $q \equiv 1 \pmod{r\mathrm{m}}$, where $r$ is multiplicative order of $\bar{\mu}$. Then the minimal spanning set of $\mathscr{C}$ is given by $B=\bigcup_{k=0}^{\mathrm{m}-1}\bigcup_{i=0}^{s_k}\{\psi_k(y)f_i^{(k)}(x), x\psi_k(y)f_i^{(k)}(x),  \dots, \\ x^{n_{i+1}^{(k)}-n_i^{(k)}-1}\psi_k(y)f_i^{(k)}(x) \}$ where $n_i^{(k)}=deg(f_i^{(k)}(x))$ and ${n_{s_k+1}^{(k)}}={\ell}$. $Rank(\mathscr{C})= \ell\mathrm{m}-\sum_{k=0}^{\mathrm{m}-1}n_0^{(k)}$.
\end{theorem}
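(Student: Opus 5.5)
The plan is to use the idempotent decomposition to split $\mathscr{C}$ into a direct sum of $\mathrm{m}$ pieces, each a copy of a one-dimensional $\lambda$-constacyclic code, and then apply Theorem~\ref{thm:cyclicspanning} to each piece.

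\textbf{Step 1: the decomposition.} Let $A=\mathcal{R}[x,y]/\langle x^{\ell}-\lambda, y^{\mathrm{m}}-\mu\rangle$ and $A_k=\psi_k(y)A$. Since the $\psi_k(y)$ are orthogonal idempotents summing to $1$, we get $A=\bigoplus_{k=0}^{\mathrm{m}-1}A_k$ as $\mathcal{R}[x]$-modules. Lemma~\ref{lem:units} gives $\psi_k(y)g(x,y)=\psi_k(y)g(x,\theta^{1+kr})$. Hence the map
\[
\Phi_k:\mathcal{R}[x]/\langle x^{\ell}-\lambda\rangle\to A_k,\qquad f(x)\mapsto \psi_k(y)f(x),
\]
is a surjective $\mathcal{R}[x]$-module homomorphism.

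It is also injective. If $\psi_k(y)f(x)=0$, expand $f(x)=\sum_i c_ix^i$. Then $\sum_i c_i x^i\psi_k(y)=0$ in $A$, and $A$ is free over $\mathcal{R}[y]/\langle y^{\mathrm{m}}-\mu\rangle$ with basis $1,x,\dots,x^{\ell-1}$. So every $c_i\psi_k(y)=0$. The coefficient of $y^{\mathrm{m}-1}$ in $\psi_k(y)$ is $\prod_{j\ne k}(\theta^{1+kr}-\theta^{1+jr})^{-1}$. This is a unit, because the differences reduce to nonzero elements of $\mathbb{F}_q$, the $\omega^{1+jr}$ being distinct. Therefore $c_i=0$ for all $i$.

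Next, $\mathscr{C}=\bigoplus_k \psi_k(y)\mathscr{C}$, since $h=\sum_k\psi_k h$ and the $A_k$ are independent. By the proof of Theorem~\ref{thm:structure}, $\psi_k(y)\mathscr{C}=\Phi_k(J_k)$. So $\mathscr{C}\cong\bigoplus_k J_k$ as $\mathcal{R}$-modules, through an isomorphism carrying $x^t f_i^{(k)}(x)$ to $x^t\psi_k(y)f_i^{(k)}(x)$.

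\textbf{Step 2: transfer.} By Theorem~\ref{thm:cyclicspanning}(2), the set $B_k=\bigcup_i\{x^tf_i^{(k)}(x):0\le t\le n^{(k)}_{i+1}-n^{(k)}_i-1\}$ is a minimal spanning set of $J_k$, and $|B_k|=\ell-n_0^{(k)}$ by part (3). Applying $\Phi_k$, the set $\Phi_k(B_k)$ is a minimal spanning set of $\psi_k(y)\mathscr{C}$. The union $B=\bigcup_k\Phi_k(B_k)$ spans $\mathscr{C}$.

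\textbf{Step 3: minimality of the union (main obstacle).} It remains to see that $B$ is minimal and that rank is additive over the direct sum. The natural argument works modulo $\gamma$. For a finite module $M$ over the chain ring, every minimal generating set has size $\dim_{\mathbb{F}_q}M/\gamma M$; this is Nakayama's lemma, since $\mathcal{R}$ is local. Also $(\bigoplus_k M_k)/\gamma(\bigoplus_k M_k)\cong\bigoplus_k M_k/\gamma M_k$.

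So if the rank of Theorem~\ref{thm:cyclicspanning}(3) agrees with $\dim J_k/\gamma J_k$, additivity is immediate. In that case $Rank(\mathscr{C})=\sum_k(\ell-n_0^{(k)})=\ell\mathrm{m}-\sum_k n_0^{(k)}$, and $B$ is minimal: any proper subset of $B$ would have size smaller than $Rank(\mathscr{C})$ and so could not span.

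If instead one keeps the paper's notion of \emph{minimal spanning set} literally, the argument is direct. Suppose some element $x^t\psi_k f_i^{(k)}$ were an $\mathcal{R}$-combination of the remaining elements of $B$. Multiplying by $\psi_k(y)$ kills every component with index $k'\ne k$, which leaves a combination inside $\psi_k(y)\mathscr{C}$. Pulling back by $\Phi_k^{-1}$ then contradicts the minimality of $B_k$ in $J_k$. The delicate point in this step is exactly the injectivity of $\Phi_k$ established in Step 1, which is why that verification comes first.
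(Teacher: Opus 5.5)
Your proof is correct. Its skeleton is the paper's: spanning comes from $h=\sum_k\psi_k(y)h(x,\theta^{1+kr})$, and minimality comes from isolating the $k$-th component. Multiplying by $\psi_k(y)$ and then applying $\Phi_k^{-1}$ is exactly the paper's substitution $y=\theta^{1+kr}$. That evaluation is a well-defined ring map on $\mathcal{R}[x,y]/\langle x^{\ell}-\lambda,y^{\mathrm{m}}-\mu\rangle$, and since $\psi_k(\theta^{1+kr})=1$ it is a left inverse of $\Phi_k$. This gives your injectivity in one line, without the leading-coefficient computation.

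The routes differ after that isolation step. The paper compares degrees in $\mathcal{R}[x]/\langle x^{\ell}-\lambda\rangle$, writing this out only for the top element $x^{\ell-n^{(k)}_{s_k}-1}\psi_k(y)f^{(k)}_{s_k}(x)$ and asserting the rest. In effect it re-proves the irredundancy of $B_k$ rather than quoting it. Your direct branch quotes Theorem~\ref{thm:cyclicspanning}(2) instead, which handles every element of $B$ at once.

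Your Nakayama branch is a genuinely different and stronger argument. Using the explicit isomorphism $\mathscr{C}\cong\bigoplus_k J_k$ and $\dim_{\mathbb{F}_q}(\bigoplus_k M_k)/\gamma(\bigoplus_k M_k)=\sum_k\dim_{\mathbb{F}_q}M_k/\gamma M_k$, it shows that $B$ has the minimum possible size $\dim_{\mathbb{F}_q}\mathscr{C}/\gamma\mathscr{C}$. Hence $Rank(\mathscr{C})$ is well defined and additive over the components. The paper takes this for granted when it passes from ``no element is a combination of the others'' to the rank formula.

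The hedge in your Step 3 is unnecessary. Over a local ring, the images of any irredundant generating set of $M$ form a basis of $M/\gamma M$. A dependence modulo $\gamma$ would make one generator redundant modulo $\gamma M$, hence redundant outright by Nakayama. So Theorem~\ref{thm:cyclicspanning}(2),(3) already give $\ell-n_0^{(k)}=\dim_{\mathbb{F}_q}J_k/\gamma J_k$ unconditionally.
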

\begin{proof}
    Choose an arbitrary element $h(x,y)$ of $\mathscr{C}$. From theorem~\ref{thm:structure}, we have\\ $h(x,y)= \sum_{k=0}^{\mathrm{m}-1}h(x,\theta^{1+kr})\psi_k(y)$ where $h(x,\theta^{1+kr}) \in J_k$. By theorem~\ref{thm:cyclicspanning}, $J_k$ has a minimal spanning set $B_k= \bigcup_{i=0}^{s_k}\{ f_i^{(k)}(x), xf_i^{(k)}(x), \dots , x^{n_{i+1}-n_i-1}f_i^{(k)}(x)\}, n_{s_k+1}= \ell$. Therefore, $h(x,\theta^{1+kr}) \in Span(B_k)$, it shows that $h(x,y) \in Span(B)$. It remains only to establish that none of the elements of $B$ arises as a linear combination of the remaining ones. If possible, let there exist such an element, i.e.
\[
\begin{aligned}
x^{\ell-n^{(k)}_{s_k}-1} f^{(k)}_{s_k}(x)\psi_k(y)
&= \sum_{k=0}^{\mathrm{m}-1}\Bigg[
   \sum_{i=0}^{s_k-1}\Big(
   a^{(k)}_{i,0} f^{(k)}_i(x)\psi_k(y)
   + a^{(k)}_{i,1} x f^{(k)}_i(x)\psi_k(y)
   + \cdots \\
&\qquad
   + a^{(k)}_{i,\,n^{(k)}_{i+1}-n^{(k)}_i-1}
     x^{\,n^{(k)}_{i+1}-n^{(k)}_i-1}
     f^{(k)}_i(x)\psi_k(y)
   \Big) \\
&\qquad
   + a^{(k)}_{s_k,0} f^{(k)}_{s_k}(x)\psi_k(y)
   + \cdots +a^{(k)}_{s_k,\,\ell-n^{(k)}_{s_k}-2}\,
x^{\,\ell-n^{(k)}_{s_k}-2}\,f^{(k)}_{s_k}(x)\,\psi_k(y) \Bigg] \\
&= \sum_{k=0}^{\mathrm{m}-1}\sum_{i=0}^{s_k}
   a^{(k)}_i(x) f^{(k)}_i(x)\psi_k(y).
\end{aligned}
\]
where $
a^{(k)}_{i}(x)
=
a^{(k)}_{i,0}+a^{(k)}_{i,1}x+\cdots
+a^{(k)}_{i,\,n^{(k)}_{i+1}-n^{(k)}_{i}-1}\,x^{\,n^{(k)}_{i+1}-n^{(k)}_{i}-1},
\quad 0\le i\le s_k-1$ and $
a^{(k)}_{s_k}(x)
=
a^{(k)}_{s_k,0}+a^{(k)}_{s_k,1}x+\cdots
+a^{(k)}_{s_k,\,\ell-n^{(k)}_{s_k}-2}\,x^{\,\ell-n^{(k)}_{s_k}-2}.$ Substituting $y=\theta^{1+kr}$, using $\psi_k(\theta^{1+kr})=1 ~~\text{and  } \psi_j(\theta^{1+kr})=0 ~~\text{where  } j \ne k$. It can be seen that the degree of the L.H.S becomes $\ell-1$ and that of the R.H.S. is at most $\ell-2$, which is a contradiction. The same argument can be used to prove that any other element $x^{\,n^{(k)}_{i+1}-n^{(k)}_i-1}f^{(k)}_i(x), 0 \le i \le s_k-1$ cannot be represented as a linear combination of other elements in the set $B$. Therefore, $B$ is a minimal spanning set. Hence $Rank(\mathscr{C})=\ell\mathrm{m}-\sum_{k=0}^{\mathrm{m}-1}n_0^{(k)}$.
\end{proof}
Examples illustrating the above results are given below.
\begin{example}\label{ex:fifth}
    Consider $\mathcal{R}=Z_{125}$ with the maximal ideal $\langle\gamma\rangle=\langle5\rangle$ and nilpotency index $\rho=3$. The corresponding residue field is $\mathbb{F}_5$. Let $\mathscr{C}$ be a two-dimensional $(2,1)$-constacyclic code of length $5 \times 4$ over $Z_{125}$ i.e $\mathscr{C}$ is an ideal of $\mathcal{R}[x,y]/\langle x^5-2, y^4-1 \rangle$. By example~\ref{ex:third}, the set of primitive idempotents of $\mathcal{R}[y]/\langle y^4-1 \rangle$ are
    \[
    \begin{aligned}
    \psi_0(y)&= 94+17y+31y^2+108y^3,\\
    \psi_1(y)&= 94+31y+94y^2+31y^3,\\
    \psi_2(y)&= 94+108y+31y^2+17y^3,\\
    \psi_3(y)&= 94+94y+94y^2+94y^3.
    \end{aligned}
    \]
    By theorem~\ref{thm:cyclicspanning}, following are $2$-constacyclic codes of length $5$ over $\mathcal{R}$,
    \[
    \begin{aligned}
    J_0 &= \langle f^{(0)}_0(x) \rangle = \langle 5(x-2) \rangle, \\
    J_1 &= \langle f^{(1)}_0(x), f^{(1)}_1(x) \rangle = \langle 25(x-2), 5(x-2)^3 \rangle, \\
    J_2 &= \langle f^{(2)}_0(x), f^{(2)}_1(x) \rangle = \langle 5(x-2)^2, (x-2)^4 \rangle, \\
    J_3 &= \langle f^{(3)}_0(x), f^{(3)}_1(x), f^{(3)}_2(x) \rangle 
    = \langle 25(x-2), 5(x-2)^3, (x-2)^4 \rangle.
    \end{aligned}
    \]
    By theorem~\ref{thm:structure}, the set $\{ \psi_0(y)f^{(0)}_0(x),\psi_1(y)f^{(1)}_0(x),\psi_1(y)f^{(1)}_1(x),\psi_2(y)f^{(2)}_0(x),\psi_2(y)f^{(2)}_1(x),\\ \psi_3(y)f^{(3)}_0(x),\psi_3(y)f^{(3)}_1(x),\psi_3(y)f^{(3)}_2(x)\}$ generates a two-dimensional $(2,1)$-constacyclic code of length $20$ over $\mathcal{R}=Z_{125}$ with $Rank(\mathcal{C})=15$.
\end{example}

\begin{example}\label{ex:sixth}
    Consider $\mathcal{R}=Z_{169}$ with the maximal ideal $\langle\gamma\rangle=\langle13\rangle$ and nilpotency index $\rho=2$. The corresponding residue field is $\mathbb{F}_{13}$. Let $\mathscr{C}$ be a two-dimensional $(7,168)$-constacyclic code of length $15 \times 6$ over $Z_{169}$ i.e $\mathscr{C}$ is an ideal of $\mathcal{R}[x,y]/\langle x^{15}-7, y^6-168 \rangle$. By example~\ref{ex:fourth}, the set of primitive idempotents of $\mathcal{R}[y]/\langle y^6-168 \rangle$ are
    \[
    \begin{aligned}
    \psi_0(y)&= 141+144y+32y^2+101y^3+60y^4+126y^5,\\
    \psi_1(y)&= 141+101y+28y^2+68y^3+141y^4+101y^5,\\
    \psi_2(y)&= 141+126y+109y^2+101y^3+137y^4+144y^5,\\
    \psi_3(y)&= 141+25y+32y^2+68y^3+60y^4+43y^5,\\
    \psi_4(y)&= 141+68y+28y^2+101y^3+141y^4+68y^5,\\
    \psi_5(y)&= 141+43y+109y^2+68y^3+137y^4+25y^5.
    \end{aligned}
    \]
    By theorem~\ref{thm:cyclicspanning}, following are $7$-constacyclic codes of length $15$ over $\mathcal{R}$,
    \[
    \begin{aligned}
    J_0 &= \langle f^{(0)}_0(x) \rangle = \langle (x+1) \rangle, \\
    J_1 &= \langle f^{(1)}_0(x) \rangle = \langle 13(x+1) \rangle, \\
    J_2 &= \langle f^{(2)}_0(x) \rangle = \langle x^3+2\rangle, \\
    J_3 &= \langle f^{(3)}_0(x), f^{(3)}_1(x) \rangle 
    = \langle 13,x^3+2 \rangle, \\
    J_4 &= \langle f^{(4)}_0(x), f^{(4)}_1(x) \rangle 
    = \langle 13,x^3+2+13(x+1) \rangle, \\
    J_5 &= \langle f^{(5)}_0(x), f^{(5)}_1(x) \rangle 
    = \langle 13,x^{12}-2x^9+4x^6+5x^3+3 \rangle.
    \end{aligned}
    \]
    By theorem~\ref{thm:structure}, the set $\{ \psi_0(y)f^{(0)}_0(x),\psi_1(y)f^{(1)}_0(x),\psi_2(y)f^{(2)}_0(x), \psi_3(y)f^{(3)}_0(x),\psi_3(y)f^{(3)}_1(x),\\ \psi_4(y)f^{(4)}_0(x),\psi_4(y)f^{(4)}_1(x),\psi_4(y)f^{(5)}_0(x),\psi_4(y)f^{(5)}_1(x)\}$ generates a two-dimensional $(7,168)$-constacyclic code of length $90$ over $\mathcal{R}=Z_{169}$ with $Rank(\mathcal{C})=85$.
\end{example}

\begin{example}\label{ex:seventh}
    Consider $\mathcal{R}=Z_{343}$ with the maximal ideal $\langle\gamma\rangle=\langle7\rangle$ and nilpotency index $\rho=3$. The corresponding residue field is $\mathbb{F}_7$. Let $\mathscr{C}$ be a two-dimensional $(5,18)$-constacyclic code of length $12 \times 2$ over $Z_{343}$ i.e $\mathscr{C}$ is an ideal of $\mathcal{R}[x,y]/\langle x^{12}-5, y^2-18 \rangle$. It is easy to see that $\theta=19$. Therefore, the set of primitive idempotents of $\mathcal{R}[y]/\langle y^2-18 \rangle$ are
    \[
    \begin{aligned}
    \psi_0(y) &= 172 + 334y \\
    \psi_1(y) &= 172 + 9y
    \end{aligned}
    \]
    By theorem~\ref{thm:cyclicspanning}, following are $5$-constacyclic codes of length $12$ over $\mathcal{R}$,
    \[
    \begin{aligned}
    J_0 &= \langle f^{(0)}_0(x),f^{(0)}_1(x) \rangle = \langle 49,(x^6+x^3+4) \rangle, \\
    J_1 &= \langle f^{(1)}_0(x), f^{(1)}_1(x) \rangle = \langle 49,(x^6-x^3+4) \rangle.
    \end{aligned}
    \]
    By theorem~\ref{thm:structure}, the set $\{ \psi_0(y)f^{(0)}_0(x),\psi_0(y)f^{(0)}_1(x),\psi_1(y)f^{(1)}_0(x),\psi_1(y)f^{(1)}_1(x)\}$ generates a two-dimensional $(2,1)$-constacyclic code of length $24$ over $\mathcal{R}=Z_{343}$ with $Rank(\mathcal{C})=24$.
\end{example}

\section{MHDR two-dimensional constacyclic codes}
In this section, we find a condition for two-dimensional constacyclic codes to be MHDR over a finite chain ring.
Let $\mathscr{C}$ be a two-dimensional constacyclic code as defined in theorem~\ref{thm:structure}. Define the set $\mathscr{C}_{\rho-1}= \{h(x,y) \in \mathbb{F}_q[x,y] / \langle x^{\ell}-\overline{\lambda}, y^{\mathrm{m}}-\overline{\mu} \rangle \mid \gamma^{\rho-1}h(x,y) \in \mathscr{C} \}$. It is easy to see that $\mathscr{C}_{\rho-1}$ is an ideal in the ring $\mathbb{F}_q[x,y] / \langle x^{\ell}-\overline{\lambda}, y^{\mathrm{m}}-\overline{\mu} \rangle$, therefore a two-dimensional constacyclic code of length $\ell\mathrm{m}$ over the finite field $\mathbb{F}_q$. By theorem~\ref{thm:ccfield}, $\mathscr{C}_{\rho-1}$ is generated by the set $\{ \zeta_k(y)g_k(x) \mid 0 \le k \le \mathrm{m}-1 \}$ where $\zeta_k(y), 0 \le k \le \mathrm{m}-1$ are primitive idempotents of the ring $\mathbb{F}_q[y] / \langle y^{\mathrm{m}}-\overline{\mu} \rangle$ and $g_k(x), 0 \le k \le \mathrm{m}-1$ are generators of constacyclic code $\{ f(x) \in \mathbb{F}_q[x] / \langle x^{\ell}-\overline{\lambda} \rangle \mid \zeta_k(y)f(x) \in \mathscr{C}_{\rho-1} \}$. Now it is easy to see that $\zeta_k(y)= \overline{\psi_k(y)}$ where $\psi_k(y)$ are primitive idempotents of the ring $\mathcal{R}[x]/ \langle x^{\ell}- \lambda \rangle$ and polynomial $g_k(x)=\overline{f_0^{(k)}(x)}$ since $\gamma^{\delta}f_0^{(k)}(x)$ is the minimum degree polynomial in $J_k$. Therefore, $\mathscr{C}_{\rho-1}= \langle \overline{\psi_k(y)f_0^{(k)}(x)} \mid 0 \le k \le \mathrm{m}-1 \rangle$. Also from theorem~\ref{thm:ccfield} $Dim(\mathscr{C}_{\rho-1})=\ell\mathrm{m}- \sum_{k=0}^{\mathrm{m}-1}n_0^{(k)}$.

\begin{theorem}\label{thm:mhdr}
    Let $\mathscr{C}$ be a two-dimensional $(\lambda,\mu)$-constacyclic code over the finite chain $\mathcal{R}$. Then $\mathscr{C}$ is MHDR over $\mathcal{R}$ if and only if $\mathscr{C}_{\rho-1}$ is MHDR over $\mathbb{F}_q$.
\end{theorem}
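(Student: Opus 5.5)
The proof will reduce both conditions to statements about the two invariants $Rank$ and minimum distance, and show that these invariants coincide for $\mathscr{C}$ and $\mathscr{C}_{\rho-1}$.

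\emph{Step 1: the ranks agree.} By Theorem~\ref{thm:rank}, $Rank(\mathscr{C})=\ell\mathrm{m}-\sum_{k=0}^{\mathrm{m}-1}n_0^{(k)}$. The discussion preceding the statement, via Theorem~\ref{thm:ccfield}, gives $Dim(\mathscr{C}_{\rho-1})=\ell\mathrm{m}-\sum_{k=0}^{\mathrm{m}-1}n_0^{(k)}$. Hence $Rank(\mathscr{C})=Dim(\mathscr{C}_{\rho-1})$, so the two Singleton-type bounds $\ell\mathrm{m}-Rank+1$ coincide.

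\emph{Step 2: the minimum distances agree.} Define the map $\varphi:\mathscr{C}_{\rho-1}\to\mathscr{C}$ by $\overline{h}\mapsto\gamma^{\rho-1}h$. It is well defined, because $\gamma^{\rho-1}\gamma=0$ makes $\gamma^{\rho-1}h$ depend only on $\overline{h}$. It is weight preserving: an entry $\gamma^{\rho-1}h_{ij}$ is zero exactly when $h_{ij}\in\langle\gamma\rangle$, that is, when $\overline{h_{ij}}=0$. This gives $d(\mathscr{C})\le d(\mathscr{C}_{\rho-1})$.

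For the reverse inequality, take a nonzero $c\in\mathscr{C}$. Let $t$ be the largest integer with all entries of $c$ in $\langle\gamma^t\rangle$, so $t\le\rho-1$. Then $\gamma^{\rho-1-t}c$ is a nonzero codeword. Write $c=\gamma^t c'$ with some entry of $c'$ a unit; then $\gamma^{\rho-1-t}c=\gamma^{\rho-1}c'$, and this lies in $\mathscr{C}$, so $\overline{c'}\in\mathscr{C}_{\rho-1}$ and is nonzero. Its support is contained in the support of $c$, because multiplying by $\gamma^{\rho-1-t}$ only kills entries. Hence $wt(\overline{c'})=wt(\gamma^{\rho-1}c')\le wt(c)$, which gives $d(\mathscr{C}_{\rho-1})\le d(\mathscr{C})$. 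Therefore $d(\mathscr{C})=d(\mathscr{C}_{\rho-1})$.

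\emph{Step 3: conclude.} Combining Steps 1 and 2, $d(\mathscr{C})=\ell\mathrm{m}-Rank(\mathscr{C})+1$ holds if and only if $d(\mathscr{C}_{\rho-1})=\ell\mathrm{m}-Dim(\mathscr{C}_{\rho-1})+1$. This is exactly the claimed equivalence.

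The main obstacle is Step 1. It relies on the identification $\overline{\psi_k}=\zeta_k$ and $g_k=\overline{f_0^{(k)}}$ (up to the unit/normalisation by $\gamma^{r_0}$) stated just before the theorem. I would verify it directly. If $\gamma^{\rho-1}\psi_k f\in\mathscr{C}$ with $\overline f\neq0$ of minimal degree, then $\gamma^{\rho-1}f\in J_k$. The minimality of $n_0^{(k)}$ in Theorem~\ref{thm:cyclicspanning} then forces $\deg\overline{f}\ge n_0^{(k)}$. Conversely, $\gamma^{\rho-1-r_0}f_0^{(k)}=\gamma^{\rho-1}h_0^{(k)}$ with $h_0^{(k)}$ monic, which yields an element of degree $n_0^{(k)}$. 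Step 2 is routine once the $\gamma$-adic normalisation is handled.
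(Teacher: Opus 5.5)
Your proof is correct and follows the paper's route. You equate $Rank(\mathscr{C})$ with $Dim(\mathscr{C}_{\rho-1})$ via Theorem~\ref{thm:rank} and the discussion before the statement, then obtain $d(\mathscr{C})=d(\mathscr{C}_{\rho-1})$ from the weight-preserving map $\overline{h}\mapsto\gamma^{\rho-1}h$ in one direction and reduction modulo $\gamma$ in the other. Your extra step of first factoring out the largest power $\gamma^t$ is a real improvement: the paper's $g_0$ (the residue of a minimum-weight word) is zero whenever all entries lie in $\langle\gamma\rangle$, and then its inequality $wt(g_0)\ge wt(\mathscr{C}_{\rho-1})$ fails. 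Likewise, your use of $\overline{h_0^{(k)}}$ in place of the paper's $\overline{f_0^{(k)}}$, which vanishes when $r_0>0$, repairs the normalisation behind the dimension count.
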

\begin{proof}
    From the above and theorem~\ref{thm:rank}, it is clear that $Dim(\mathscr{C}_{\rho-1})= Rank(\mathscr{C})$. It is sufficient to prove $d(\mathscr{C}_{\rho-1})=d(\mathscr{C})$. Let $h(x,y)$ be an element of $\mathscr{C}_{\rho-1}$ such that $wt(\mathscr{C}_{\rho-1})=wt(h(x,y)).$ It is clear that $wt(h(x,y))=wt(\gamma^{\rho-1}h(x,y))$ since $h(x,y) \in \mathbb{F}_q[x,y]$. Also $\gamma^{\rho-1}h(x,y) \in \mathscr{C}$ which implies that $wt(\mathscr{C}_{\rho-1})=wt(h(x,y))=wt(\gamma^{\rho-1}h(x,y)) \geq wt(\mathscr{C}).$ Conversely, let $g(x,y)= \sum_{i=0}^{\rho-1}\gamma^i g_i(x,y) \in \mathscr{C}$ be such that $wt(g(x,y))=wt(\mathscr{C})$. Since $\gamma^{\rho-1}g(x,y) \in \mathscr{C} \implies g_0(x,y) \in \mathscr{C}_{\rho-1}$. Therefore, $wt(\mathscr{C})=wt(g(x,y)) \geq wt(g_0(x,y)) \geq wt(\mathscr{C}_{\rho-1})$. Hence $wt(\mathscr{C})=wt(\mathscr{C}_{\rho-1}) \implies d(\mathscr{C}_{\rho-1})=d(\mathscr{C}).$ Therefore, $\mathscr{C}$ is MHDR over $\mathcal{R}$ if and only if $\mathscr{C}_{\rho-1}$ is MHDR over $\mathbb{F}_q$.
\end{proof}
Examples illustrating the above results are given below.
\begin{example}
    Consider $\mathcal{R}=Z_{81}$ with the maximal ideal $\langle\gamma\rangle=\langle3\rangle$ and nilpotency index $\rho=4$. The corresponding residue field is $\mathbb{F}_3$. Let $\mathscr{C}$ be a two-dimensional $(80,28)$-constacyclic code of length $14$ over $Z_{81}$ i.e $\mathscr{C}$ is an ideal of $Z_{81}[x,y]/\langle x^7-80, y^2-28 \rangle$. The primitive idempotents of $Z_{81}[y]/\langle y^2-28 \rangle$ are $\psi_0(y)=41+67y,~ \psi_1(y)=41+14y$. Consider $80$-constacyclic codes of length $7$ over $Z_{81}$ as $J_0=\langle3\rangle, ~ J_1=\langle1+x\rangle$. By theorem~\ref{thm:structure}, the set $\{3(41+67y), (1+x)(41+14y) \}$ generates a two-dimensional $(80,28)$-constacyclic code of length $14$ over $Z_{81}$. $\mathcal{C}_{\rho-1}=\langle2+y,(x+1)(2+2y)\rangle$ is a two-dimensional $(2,1)$-constacyclic code of length $14$ over $\mathbb{F}_3$.
    The minimum hamming distance of this code is $2$. By theorem~\ref{thm:mhdr}, $\mathscr{C}$ is MHDR code over $Z_{81}$.
\end{example}
    
\begin{example}
    Consider $\mathcal{R}=Z_{361}$ with the maximal ideal $\langle\gamma\rangle=\langle19\rangle$ and nilpotency index $\rho=2$. The corresponding residue field is $\mathbb{F}_{19}$. Let $\mathscr{C}$ be a two-dimensional $(7,69)$-constacyclic code of length $18$ over $Z_{361}$ i.e., $\mathscr{C}$ is an ideal of $Z_{361}[x,y]/\langle x^6-7, y^3-69 \rangle$. The primitive idempotents of $Z_{361}[y]/\langle y^3-69 \rangle$ are $\psi_0(y)=241+343y+250y^2,~ \psi_1(y)=241+159y+33y^2,~ \psi_2(y)=241+220y+78y^2$. Consider $7$-constacyclic codes of length $6$ over $Z_{361}$ as $J_0=\langle x-2\rangle, ~ J_1=\langle19,(x-5)\rangle, ~J_2=\langle x^2-5x+6\rangle$. By theorem~\ref{thm:structure}, the set $\{(x-2)(241+343y+250y^2),19(241+159y+33y^2),)(x-5)(241+159y+33y^2),(x^2+5x+6)(241+220y+78y^2) \}$ generates a two-dimensional $(7,69)$-constacyclic code of length $18$ over $Z_{361}$ with $Rank(\mathcal{C})=15$. $\mathcal{C}_{\rho-1}=\langle(x-2)(13+y+3y^2),(13+7y+14y^2),(x^2-5x+6)(13+11y+2y^2)\rangle$ is a two-dimensional $(7,12)$-constacyclic code of length $18$ over $\mathbb{F}_{19}$. The minimum hamming distance for this code is $3$. By theorem~\ref{thm:mhdr}, $\mathscr{C}$ is not MHDR code over $Z_{361}$.
\end{example}

\section{Conclusion}
Firstly, we determined the primitive idempotents of the ring $\mathcal{R}[y]/ \langle y^{\mathrm{m}}-\mu \rangle$. Subsequently, using the structure of constacyclic codes of length $\ell$ over $\mathcal{R}$ and the primitive idempotents obtained, we established the generators of two-dimensional $(\lambda, \mu)$-constacyclic codes of length $\ell\mathrm{m}$ over $\mathcal{R}$ whose residue field $\mathbb{F}_q$ satisfies $q \equiv 1 \pmod{r\mathrm{m}}$, where $r$ denotes the multiplicative order of $\bar{\mu}$. Finally, we found a condition under which such codes are MHDR over $\mathcal{R}$. Some examples are given throughout the paper to illustrate the obtained results.

\printbibliography

\end{document}